\NewDocumentCommand{\TODO}{m m}%
  {{\bfseries\color{#1}[#2]}}%
\RenewDocumentCommand{\TODO}{m m}{}
\newcommand{\Val}{\mathtt{Val}}
\newcommand{\status}[2]{\texttt{status} \; #1 \; #2}
\newcommand{\Idle}{\texttt{Idle}}
\newcommand{\Pending}[2]{\texttt{Pending} \; #1 \; #2}
\newcommand{\Linearized}[1]{\texttt{Linearized} \; #1}
\newcommand{\Prop}{\texttt{Prop}}
\newcommand{\Invoke}[2]{\texttt{Invoke} \; #1 \; #2}
\newcommand{\Response}[1]{\texttt{Response} \; #1}
\newcommand{\Intermediate}{\texttt{Intermediate}}
\newcommand{\Sub}[3]{[#1 / #2]#3}
\newcommand{\EvolveInv}{\texttt{evolve\_inv}}
\newcommand{\EvolveRet}{\texttt{evolve\_ret}}
\newcommand{\LinearizePending}{\mathtt{linearize\_pending}}
\newcommand{\Evolve}{\texttt{evolve}}
\newcommand{\metaconfiguration}[2]{\texttt{meta\_configuration} \; \Pi \; \omega}
\newcommand{\None}[1]{\mathsf{None}}
\newcommand{\Arg}{\mathtt{Arg}}
\newcommand{\TermInvoke}[3]{\mathtt{Invoke}\,#1\,#2\,#3}
\newcommand{\Pair}[2]{\langle #1, #2 \rangle}
\newcommand{\ProjL}[1]{\pi_1(#1)}
\newcommand{\ProjR}[1]{\pi_2(#1)}
\newcommand{\termeval}[7]{\langle #1, #2, #3, #4, #5 \rangle \Downarrow_e \langle #6, #7 \rangle}
\newcommand{\dom}[1]{\text{dom} \; #1}
\newcommand{\Seq}[2]{#1; #2}
\newcommand{\IfStmt}[3]{\mathtt{if} \; #1 \; \mathtt{then} \; #2 \; \mathtt{else} \; #3}
\newcommand{\Assign}[2]{#1 := #2}
\newcommand{\GotoStmt}[1]{\mathtt{goto}\,#1}
\newcommand{\ReturnStmt}[1]{\mathtt{return}\,#1}
\newcommand{\Cont}{\mathsf{Continue}}
\newcommand{\Ret}[1]{\mathsf{Return}\,#1}
\newcommand{\GotoSig}[1]{\mathsf{Goto}\,#1}
\newcommand{\stmteval}[8]{\langle #1,\,#2,\,#3,\,#4,\,#5 \rangle \Downarrow_s \langle #6,\,#7,\,#8 \rangle}
\newcommand{\Initial}[1]{\texttt{Initial} \; #1}
\newcommand{\Step}[4]{\texttt{Step} \; #1 \; #2 \; #3 \; #4}
\newcommand{\final}[1]{\texttt{final} \; #1}
\newcommand{\tracker}[1]{\mathtt{tracker}(#1)}
\newcommand{\RWCell}{\mathsf{RWCELL}}
\newcommand{\OP}[1]{\mathsf{OP}_{#1}}
\newcommand{\Read}{\mathsf{Read}}
\newcommand{\Write}{\mathsf{Write}}
\newcommand{\RCASCell}{\mathsf{RCASCELL}}
\newcommand{\CAS}{\mathsf{CAS}}
\newcommand{\States}[1]{\Sigma_{#1}}
\newcommand{\frameex}[4]{\{ \mathtt{op} = #1, \mathtt{arg} = #2, \mathtt{pc} = #3, \mathtt{registers} = #4\}}
\newcommand{\InitialFrame}[2]{\frameex{#1}{#2}{0}{\emptyset}}
\newcommand{\Next}[1]{\mathsf{Next} \; #1}
\newcommand{\WF}[2]{\mathcal{WF}_{#1}(#2)}
\title{Mechanized Metatheory of Forward Reasoning for End-to-End Linearizability Proofs} %TODO Please add
\author{Zachary Kent}{Carnegie Mellon University}{zkent@cs.cmu.edu}{}{}
\author{Ugur Y. Yavuz}{Boston University}{uyyavuz@bu.edu}{}{}
\author{Siddhartha Jayanti}{Dartmouth University}{jayanti.siddhartha@gmail.com}{}{}
\author{Stephanie Balzer}{Carnegie Mellon University}{balzers@cs.cmu.edu}{}{}
\author{Guy Blelloch}{Carnegie Mellon University}{balzers@cs.cmu.edu}{}{}%TODO mandatory, please use full name; only 1 author per \author macro; first two parameters are mandatory, other parameters can be empty. Please provide at least the name of the affiliation and the country. The full address is optional. Use additional curly braces to indicate the correct name splitting when the last name consists of multiple name parts.
\authorrunning{Z. Kent, U.\,Y. Yavuz, S. Jayanti, S. Balzer, and G. Blelloch} %TODO mandatory. First: Use abbreviated first/middle names. Second (only in severe cases): Use first author plus 'et al.'
\keywords{Concurrency, Verification} %TODO mandatory; please add comma-separated list of keywords
\begin{document}

\maketitle

%TODO mandatory: add short abstract of the document
\begin{abstract}
  In the past decade, many techniques have been developed to prove linearizability, 
  the gold standard of correctness for concurrent data structures. 
  Intuitively, linearizability requires that every operation on a concurrent 
  data structure appears to take place instantaneously, 
  even when interleaved with other operations. 
  Most recently, Jayanti et al. presented the first sound and complete ``forward reasoning'' 
  technique for proving linearizability that relates the behavior of a concurrent data structure 
  to a reference atomic data structure as time moves forward. 
  This technique can be used to produce machine-checked proofs of linearizability in TLA+/TLAPS.
  However, while Jayanti et al.'s approach is shown to be sound and complete, 
  a mechanization of this important metatheoretic result is still outstanding. 
  As a result, it is not possible to produce verified end-to-end proofs of linearizability. 
  To reduce the size of this trusted computing base, we formalize this forward reasoning technique 
  and mechanize proofs of its soundness and completeness in Rocq. 
  As a case study, we use the approach to produce a verified end-to-end proof of linearizability 
  for a simple concurrent register.
\end{abstract}

\section{Introduction}
\label{sec:intro}
Concurrent data structures are invaluable in modern shared-memory parallel computing.
Due to scheduling asynchrony however, even deterministic concurrent algorithms have exponentially many possible executions, thereby making them notoriously difficult to design without subtle errors and race conditions.
To ensure reliability of concurrent data structures, decades of work \cite{DongolDerrick2015} has gone into designing techniques for proving {\em linearizability} \cite{HerlihyWingLinearizability}---the gold standard for correctness, which states that operations, which actually take place over intervals of time, must {\em appear} to take effect atomically.
A significant goal of this line of work has been to design a simple, sound and complete verification technique for producing airtight, machine-verified proofs of linearizability.
In 2024, Jayanti et al. \cite{JJYH2024} took an important stride towards achieving this goal by designing {\em meta-configuration tracking}---the first sound and complete {\em forward reasoning} proof technique for proving linearizability.
This technique is easy to use, since it only requires the prover to reason forwards in time.
Consonantly, it has immediate use in machine-verifying both theoretically difficult and practically relevant concurrent data structures. 
Meta-configuration tracking has been used to machine-verify algorithms that are infamous for being difficult to prove linearizable, such as the Herlihy-Wing queue \cite{HerlihyWingLinearizability}.
Just as importantly, it has been used to machine-verify the correctness of concurrent data structures of considerable practical significance, such as the Jayanti-Tarjan union-find object \cite{JayantiTarjanUF}, which is used in Google's open-source graph mining library to power ``shared memory parallel clustering algorithms which scale to graphs with tens of billions of edges'' \cite{GoogleGraphMining}, and the fast, concurrent Parlay hash table \cite{ParlayHash}.
While important algorithms have been machine-verified via meta-configuration tracking, the technique's {\em metatheory}---i.e., proofs of soundness and completeness---remained un-mechanized.

In this work, we ameliorate this situation by giving the first mechanization of meta-configuration tracking's metatheory.
Furthermore, we show that this enables end-to-end machine-verified proofs of linearizability by proving the correctness of a simple concurrent register algorithm. 
All our proofs are fully mechanized in Rocq.

\subsection{Importance of machine-verifying concurrent data structures}

Due to scheduling asynchrony, even deterministic concurrent algorithms have exponentially many possible executions.
The myriad possibilities make it nearly impossible to ensure race-free concurrent code through conventional unit testing, compelling machine-verification.

The need for machine-verification is also bolstered by several practical examples of subtle races that have had disastrous consequences.
A concurrency race in the power-grid was among the causes of the 2003 Northeast Blackout which effected over 50 million people across the United States and Canada \cite{PoulsenBlackout}.
NASA's Mars Pathfinder repeatedly crashed due to a concurrency bug, nearly jeopardizing the entire mission which required years of preparation \cite{MarsRover}.
Race conditions in the Therac-25 radiation therapy machines caused radiation-poisoning, which eventually resulted in at least three deaths and several more injuries \cite{LevesonTherac}.
In short, fundamental algorithms are deployed in all sorts of systems, including those which are critical to the lives and well-being of people, thus motivating machine-verified rigor in proving subtle concurrent algorithms.

The mechanization of the metatheory that we provide in this work enables such machine-verification guarantees of concurrent data structure correctness.

\subsection{Challenges in proving linearizability and meta-configuration tracking}

Perhaps the most intuitive method for proving linearizability is {\em forward simulation} \cite{JonssonForwardSimulation}.
Linearizability is the condition that operations appear to take effect instantaneously at a point in time in its interval of execution, thus it is often proved by identifying these {\em linearization points}.
If each linearization point can be identified, online, as it occurs in a run, then the proof of linearizability is made simple.
In forward simulation, the prover keeps a copy of an atomic reference object and performs an induction over the steps of an arbitrary run of an algorithm using the implemented object, showing that its behavior is identical to that of the algorithm run with the atomic reference object if the reference object performs operations at the linearization points of the implemented object.

While forward simulation is simple, it can only be used when linearization points can be identified and committed to {\em online}.
Perhaps surprisingly, this turns out to be an important constraint.
Only a small subset of linearizable algorithms, known as {\em strong linearizable} algorithms \cite{GolabStrongLinearizability}, are amenable to this type of proof.
The remaining algorithms exhibit {\em future-dependent} linearizability \cite{DongolDerrick2015}, 
i.e., their linearization points depend on the future (since they cannot be identified online).
Several proofs techniques for such algorithms require more complex reasoning, such as reasoning backwards in time \cite{JonssonBackwardSimulation} or prophesying the future \cite{AbadiLamport1991}.
In contrast, meta-configuration tracking provides a simple solution that essentially boils down to maintaining multiple forward simulations.

The main observation of meta-configuration tracking is that at any point in time $t$, a future-dependently linearizable algorithm has not one, but several different {\em sets} of linearization points.
Each set corresponds to a single forward simulation with an atomic reference object.
As the future unfolds, {\em some} forward simulation will necessarily extend into the future, but determining {\em which one} is not possible online.
In other words, it is not possible to {\em commit} to any single forward simulation in an online manner.
Meta-configuration tracking solves this problem by not committing. 
It just maintains all possible forward simulations over the course of the run and discards a simulation if and when that simulation fails to extend to the future. As long as at least one forward simulation remains, the run is linearizable.
In their paper, Jayanti et al. \cite{JJYH2024} present this method and prove that it is complete---i.e., powerful enough to prove {\em any} linearizable algorithm correct, including the complex future-dependent ones.

Upon mechanizing the metatheory of meta-configuration tracking, we demonstrate the use of the metatheory by giving an end-to-end mechanized proof of a linearizable concurrent register implementation.
The register implementation's code is brief, but it has a future-dependent linearization structure, 
thereby showcasing the full power of the meta-configuration tracking method.

\subsection{Our contributions}

We make the following contributions in this paper:
\begin{enumerate}
	\item
	{\bf Meta-theory mechanization}:
	We mechanize the metatheory of meta-configuration tracking---which is the first complete forward-reasoning proof technique for linearizability.

	\item
	{\bf End-to-end linearizability proof example}:
	We employ our meta-theorem to complete an end-to-end proof of linearizability of a concurrent register algorithm.
	The register implementation is simple, but it is future-dependently linearizable, thereby demonstrating the full power of the meta-configuration tracking approach.
\end{enumerate}
All our proofs are fully mechanized in Rocq.

\subsection{Related work}
\label{sec:related}
In the three decades since linearizability was first introduced by Herlihy and Wing \cite{HerlihyWingLinearizability},
  an extensive body of work has emerged on the topic of techniques and tools for proving linearizability.
For a comprehensive overview of these techniques, 
  we refer the reader to the survey by Dongol and Derrick \cite{DongolDerrick2015}.
Some explored approaches include 
  forward simulation-based techniques \cite{Abdulla2017,Amit2007,Vafeiadis2009,Bouajjani2017}, 
    which rely on forward simulation \cite{JonssonForwardSimulation};
  backward simulation-based techniques \cite{Schellhorn2014,VafeiadisPhD} 
    relying on backward simulation \cite{JonssonBackwardSimulation};
  methods leveraging history and prophecy variables \cite{AbadiLamport1991},
    such as proofs based on \emph{logical atomicity} \cite{Jung2020Prophecy}
    which we will address later in this section;
  aspect-oriented proofs \cite{ChakrabortyHenzinger2015,OhmanNanevski2022,Dodds2015},
    first introduced by Henzinger et al. \cite{HenzingerSezginVafeiadis2013};
  and partial-order maintenance \cite{Khyzha2017}
    which later inspired Oliveira Vale et al.'s recent work 
    on a game semantics-based formulation of linearizability \cite{OliveiraVale2023}.
More recently, the \emph{meta-configuration tracking} of Jayanti et al. 
  has emerged as a new, forward reasoning-based verification technique \cite{JJYH2024}.
These methods vary in their applicability---some are limited to specific 
classes of data structures and hence are not \emph{universal}
(e.g., aspect-oriented proofs).
Others, such as partial-order maintenance \cite{Khyzha2017},
are \emph{incomplete} \cite{OliveiraVale2023,JJYH2024}
meaning they cannot establish linearizability for all correct implementations;
or whether they are \emph{complete} is still an open question \cite{Jung2020Prophecy, OliveiraVale2023}.
Furthermore, they vary in their level of mechanization, with some being fully or partially machine-verified,
while others rely entirely on manual reasoning.

Among these, only two methods are known to be both \emph{universal} and \emph{complete}.
The backward simulation-based approach of Schellhorn et al. \cite{Schellhorn2014} satisfies these properties,
  with both its meta-theoretical soundness and completeness proofs,
  and linearizability proofs for case studies
  fully implemented in the KIV theorem prover \cite{Reif1998}.
However, backward simulation proofs are known to be complex and rather unintuitive for algorithm designers,
  as they require incrementally relating the behavior of an implementation to that of an abstract specification
  in a direction opposite to its execution \cite{VafeiadisPhD, DongolDerrick2015, Khyzha2017}.
In contrast, the meta-configuration tracking technique by Jayanti et al. \cite{JJYH2024} is
  a more recent method that is also universal and complete.
  Crucially, it is the first \emph{forward reasoning} proof technique that avoids 
  backward reasoning and the need for future predictions altogether, making it more intuitive for verification.
  However, while its case studies were fully machine-verified in TLAPS \cite{CousineauTLAPS}, 
  its meta-theoretical results had not been mechanized until our work.

More recently, the concept of \emph{logical atomicity} \cite{DaRochaPinto2014} 
has been introduced as an alternative to linearizability within the framework 
of concurrent separation logic (CSL).
It has gained particular prominence in the higher-order CSL framework Iris \cite{Iris}.
Birkedal et al. \cite{BirkedalLogicalAtomicityLinearizability}
proved that if the CSL specification of a concurrent object operation is logically atomic,
then the operation is necessarily linearizable---a 
meta-theoretical result they formally verified in Rocq using Iris.
Moreover, several concurrent data structures,
including the Herlihy-Wing queue \cite{HerlihyWingLinearizability},
a well-known example of a future-dependent linearizable data structure,
have been shown to be logically atomic in Iris \cite{Jung2020Prophecy}.
However, the completeness of this approach remains an open question,
as it is not known whether every linearizable data structure 
admits a logically atomic CSL specification \cite{OliveiraVale2023}.

\section{Preliminaries}
\label{sec:prelims}
% Example of shared concurrent queue
% Define atomicity
% relate events to PC
% final -> next configuration

% Subsections
  % Modeling Processes and objects
  % Implementing base objects...

Before discussing our work, we first discuss some basic preliminaries useful for understanding
the intuitive semantics of concurrent execution. A concurrent system is described by a set of processes $\Pi$ that maintain local state and communicate through {\it shared objects} $\Omega$.
When an operation is invoked on a shared object, its internal implementation may in turn perform operations on a set of primitive
\textit{base objects}. Whereas shared objects and their types define an API according to which processes may communicate, base objects concern their low-level implementation details.
For example, consider a concurrent queue backed by an array. An invocation of an enqueue may in turn trigger a write to the array, one of the queue's base objects.
Because these operations are invoked concurrently, the execution of one operation may be simultaneously interleaved with another. This begs the question of how the correctness of a concurrent data structure
ought to be characterized.

The gold standard for correctness of
concurrent data structures is linearizability, which dictates intuitively that every operation on a concurrent
data structure should appear atomic, even when interleaved with other operations. Herlihy and Wing \cite{HerlihyWingLinearizability}
originally defined linearizability with respect to the set of {\it histories} an implementation admits.
A history is a sequence of \textit{invocation} and \textit{response} events corresponding to the call to and return from different operations. A {\it sequential} history is one
in which operations are not interleaved; every invocation is followed immediately by a matching response.
A data structure is specified by a set of valid sequential histories; for example, a sequential history
in which an enqueue of 5 is followed by a dequeue of 6 is invalid. An {\it extension} of a history $A$
is one obtained by appending some number of responses to pending operations in $A$. The {\it completion}
of a history $A$, denoted $\mathsf{complete}(A)$, is the subsequence of $A$ excluding unmatched invocations. 
History $A$ is {\it equivalent} to history $B$ iff $B$ can be obtained by swapping events corresponding to overlapping (concurrent) operations.
Finally, an implementation is {\it linearizable} iff every valid concurrent history $A$ admitted by 
the implementation has an extension whose completion is equivalent to a valid sequential history.

\begin{figure}[!htbp]
  \centering
  \begin{tikzpicture}
    % P1
    \node[draw, rounded corners=2mm, fill=gray!20, 
      minimum width=0.8cm, minimum height=0.8cm]
      at (0.5, 3.5) {$p_1$};
    
    \node[draw, rounded corners=1mm, fill=green!20, 
      minimum width=0.5cm, minimum height=0.5cm, rotate=45] 
      at (2, 3.5) (1E5) {};
    \node at (2, 4) {{\footnotesize \textsf{Enq}(5)}};
    
    \node[draw, rounded corners=1mm, fill=green!20, 
      minimum width=0.5cm, minimum height=0.5cm, rotate=45] 
      at (6, 3.5) (1E5F) {};
    \draw[thick] ([xshift=0.3cm]1E5.center) -- ([xshift=-0.3cm]1E5F.center);

    \node[draw, circle, fill=violet!50, 
      minimum width=0.5cm, minimum height=0.5cm] 
      at (5.2, 3.5) (1E5L) {};

    % P2
    \node[draw, rounded corners=2mm, fill=gray!20, 
      minimum width=0.8cm, minimum height=0.8cm]
      at (0.5, 2) {$p_2$};

    \node[draw, rounded corners=1mm, fill=green!20, 
      minimum width=0.5cm, minimum height=0.5cm, rotate=45] 
      at (3, 2) (2E10) {};
    \node at (3, 2.5) {{\footnotesize \textsf{Enq}(10)}};

    \node[draw, rounded corners=1mm, fill=green!20, 
      minimum width=0.5cm, minimum height=0.5cm, rotate=45] 
      at (5, 2) (2E10F) {};
    \draw[thick] ([xshift=0.3cm]2E10.center) -- ([xshift=-0.3cm]2E10F.center);

    \node[draw, circle, fill=violet!50, 
      minimum width=0.5cm, minimum height=0.5cm] 
      at (4, 2) (2E10L) {};

    \node[draw, rounded corners=1mm, fill=green!20, 
      minimum width=0.5cm, minimum height=0.5cm, rotate=45] 
      at (7.5, 2) (2D) {};
    \node at (7.5, 2.5) {{\footnotesize \textsf{Deq}()}};

    \node[draw, rounded corners=1mm, fill=green!20, 
      minimum width=0.5cm, minimum height=0.5cm, rotate=45] 
      at (10.5, 2) (2DF) {};
    \node at (10.5, 2) {{\small $5$}};
    \draw[thick] ([xshift=0.3cm]2D.center) -- ([xshift=-0.3cm]2DF.center);

    \node[draw, circle, fill=violet!50, 
      minimum width=0.5cm, minimum height=0.5cm] 
      at (9.5, 2) (2DL) {};

    % P3
    \node[draw, rounded corners=2mm, fill=gray!20, 
      minimum width=0.8cm, minimum height=0.8cm]
      at (0.5, 0.5) {$p_3$};
    
    \node[draw, rounded corners=1mm, fill=green!20, 
      minimum width=0.5cm, minimum height=0.5cm, rotate=45] 
      at (5.5, 0.5) (3D) {};
    \node at (5.5, 1) {{\footnotesize \textsf{Deq}()}};

    \node[draw, rounded corners=1mm, fill=green!20, 
      minimum width=0.5cm, minimum height=0.5cm, rotate=45] 
      at (8.5, 0.5) (3DF) {};
    \node at (8.5, 0.5) {{\small $10$}};
    \draw[thick] ([xshift=0.3cm]3D.center) -- ([xshift=-0.3cm]3DF.center);

    \node[draw, circle, fill=violet!50, 
      minimum width=0.5cm, minimum height=0.5cm] 
      at (7, 0.5) (3DL) {};

    % time
    \draw[thick, ->] (0, -0.5) -- (12, -0.5) node[below] {\footnotesize time};
  \end{tikzpicture}
  % \caption{A concurrent queue with three processes $p_1$, $p_2$, and $p_3$.
  % Each process executes a sequence of operations on the queue, represented by the green squares.
  % The violet circles represent the linearization points of the operations.}
  \caption{An execution of a concurrent queue involving three processes. 
  Operations are shown as line segments between two squares,
  marking an invocation and a response. 
  Violet circles denote linearization points. 
  Observe that the return values are consistent with 
  this linearization order.}
  \label{fig:concurrent-queue}
\end{figure}
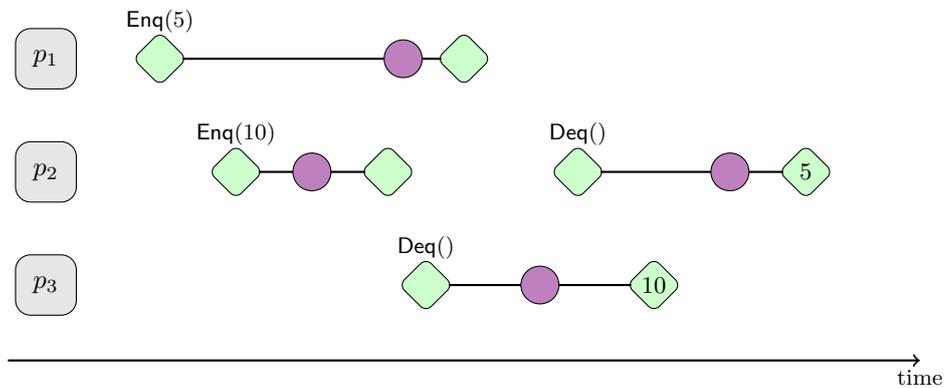

In \Cref{fig:concurrent-queue}, we see a visual depiction of a concurrent execution of a concurrent queue.
Each line segment represents the execution of an operation over time; the left endpoint is its invocation,
while the right endpoint is its response. Some operations are interleaved. If the implementation of the
queue is linearizable, then it must be possible to select a linearization point for every operation
somewhere between its invocation in response. We assume that this is indeed the case, and that
the violet circles on every line are the linearization points.

\section{Model and definitions}
\label{sec:model}
This section lays the foundation upon which the forward reasoning proof technique relies.
It formalizes the components and working of a concurrent system and the notion of linearizability.

\subsection{Components of a concurrent system}

A concurrent system is described by a set of processes $\Pi$ that maintain local state and communicate through {\it shared objects} $\Omega$.
Every shared object is modeled as a state machine, taking transitions from one state to the next when an operation is performed on it.
Each operation on a shared object takes as input (and produces as output) a value drawn from set of values $\Val$. It is given by the following inductive type:

\begin{minted}{coq}
  Inductive Val :=
    | Int (n : Z)
    | Bool (b : bool)
    | Unit
    | Pair (v : Val) (v' : Val).
\end{minted}

The behavior of every object $\omega \in \Omega$ is described by its associated \textit{type}, comprising the following information:

\begin{itemize}
    \item A set $\States{}$ of states
    \item A set $\OP{}$ consisting of the operations that can be performed on it
    \item A transition relation $\delta \subseteq \States{} \times \Pi \times \OP{} \times \Val \times \States{} \times \Val$.
          Specifically, $(\sigma, \pi, \mathit{op}, \mathit{arg}, \sigma', \mathit{ret}) \in \delta$ means that when operation $\mathit{op}$ is performed with argument $\mathit{arg}$ on object $\omega$ in initial state $\sigma$, it transitions to a final state $\sigma'$ and produces a return value $ret$.
          Note that this definition allows transitions to be \textit{nondeterministic}.
\end{itemize}

An object type is represented via the following record, parameterized over the set $\Pi$ of processes:

\begin{minted}[escapeinside=~~, mathescape=true]{coq}
  Record object_type (~$\Pi$~ : Type) := {
    ~$\Sigma$~ : Type;
    OP : Type;
    ~$\delta$~ : ~$\Sigma$~ → ~$\Pi$~ → OP → Val → ~$\Sigma$~ → Val → Prop;
  }.
\end{minted}

To formalize the notion that every object in $\Omega$ ought to be associated with a type, we define a typeclass requiring such:

\begin{minted}[escapeinside=~~]{coq}
  Class Object (~$\Pi$~ : Type) (~$\Omega$~ : Type) `{EqDecision ~$\Omega$~} := type : ~$\Omega$~ → object_type ~$\Pi$~.
\end{minted}

We write $\States{\omega}$ and $\OP{\omega}$ to mean the the set of states and operations associated with the type of $\omega$, respectively.

Now that we have formalized shared objects, we discuss how processes can mutate them (and their own local state).
Every $\pi \in \Pi$ maintains a set of private variables with names drawn from set $\mathit{Var}$ alongside a distinguished program counter. % pc is per process
An {\it algorithm} specifies an initial state for every object $\omega \in \Omega$ alongside a program that every process $\pi \in \Pi$ executes.
The execution of an algorithm proceeds in atomic steps. During an asynchronous execution, a scheduler chooses some process $\pi$ to take a step at every discrete time step.
When process $\pi$ takes a step, it will execute a \textit{line} of its program. There are three kinds of lines:

\begin{itemize}
    \item \textit{Invocation} lines, corresponding to the invocation of some operation $\mathit{op} \in \OP{\omega}$ on object $\omega$ with some argument $v$.
    \item \textit{Intermediate} lines, corresponding to a line of code in a procedure
    \item \textit{Response} lines, corresponding to the response of an operation, tagged with return value $v$.
\end{itemize}

These lines are represented by the following inductive type

\begin{minted}[escapeinside=~~]{coq}
  Variant line ~$\Pi$~ {~$\Omega$~} `{Object ~$\Pi$~ ~$\Omega$~} (~$\omega$~ : ~$\Omega$~) :=
    | Invoke (op : ~$\OP{\omega}$~) (arg : Val)
    | Intermediate
    | Response (resp : Val).
\end{minted}

We refer to the execution of line $l$ by process $\pi$ as an {\it event}. When the scheduler selects a process to execute one of these lines, the state of the system will transition from 
one configuration to the next. Each configuration describes a snapshot of the system's state
at a given point in time. A \textit{run} denotes one possible sequence of alternating configurations and events, beginning in an initial configuration.
We encode it as follows:

\begin{minted}[escapeinside=~~]{coq}
  Inductive run (C ~$\Pi$~ : Type) `{Object ~$\Pi$~ ~$\Omega$~} (~$\omega$~ : ~$\Omega$~) :=
    | Initial (c : C)
    | Step (r : run C ~$\Pi$~ ~$\Omega$~) (~$\Pi$~ : ~$\Pi$~) (l : line ~$\Pi$~ ~$\Omega$~) (c : C).
\end{minted}

This type is parameterized over the type of configuration $\mathtt{C}$, which is left abstract for now. In the simplest case,
a configuration maintains a global state assignment for every object and local state assignment for every process.
However, as we will see, some configurations also maintain additional auxiliary proof state.

% Similarly, we define an inductive proposition describing what runs are well-formed, given a semantics
% stating how an event evolves one configuration to the next:

% \begin{minted}[escapeinside=~~]{coq}
%   Inductive Run {C ~$\Pi$~ : Type} `{Countable ~$\Pi$~, Object ~$\Pi$~ ~$\Omega$~} {~$\omega$~ : ~$\Omega$~} (c~$_0$~ : C) 
%       (step : sem C ~$\Pi$~ ~$\Omega$~) : run C ~$\Pi$~ ~$\Omega$~ → Prop :=
%     | RunInitial : Run c~$_0$~ step (Initial c~$_0$~)
%     | RunStep r ~$\Pi$~ l c : Run c~$_0$~ step r → step (final r) ~$\Pi$~ l c → Run c~$_0$~ step (Step r ~$\Pi$~ l c).
% \end{minted}

\subsection{Implementations using base objects}

\begin{figure}[htbp]
\begin{bnf}
  $e$ : \textsf{Term} ::= $x$ // $n \in \mathbb{N}$ // $b \in \mathbb{B}$ // $\top$ // $\Arg$ 
  | $\ProjL{e}$ // $\ProjR{e}$ // \texttt{$e_1 \oplus e_2$} // $\neg e$ // $\Pair{e_1}{e_2}$ // \texttt{$\TermInvoke{\omega}{\mathit{op}}{\mathit{arg}}$}
  ;;
  $s$ : \textsf{Statement} ::= $\Seq{s}{s}$ // $\Assign{x}{e}$ // $\IfStmt{e}{s_1}{s_2}$ // $\ReturnStmt{e}$ // $\TermInvoke{\omega}{\mathit{op}}{\mathit{arg}}$
\end{bnf}

\caption{Syntax of Terms and Statements}
\label{fig:syntax}

\end{figure}

We now work towards making this semantics concrete; until this point, the syntax of the programs each process
executes, alongside their behavior, has been left abstract. Note that runs describe interleavings of
operations on high-level shared data structures, such as queues. However, abstract data structures are often
implemented using lower-level primitives, such as read-write registers. That is, the intermediate lines
of high-level operations may invoke other operations on low level primitives, a set of \textit{base objects}.
Intermediate lines execute simple imperative \textit{statements} that mutate base objects. Statements may also contain \textit{terms} that evaluate down to values. Their syntax, alongside those of terms, is given by 
the grammar in \Cref{fig:syntax}. Statements include assignments of terms to local variables, invocations of operations on base objects,
\textsf{goto}'s that transfer control to another line, sequential composition, and procedure returns.
Formally, an {\it implementation} of $\omega \in \Omega$ using a set of base objects $\overline{\Omega}$ provides the
following information:

\begin{itemize}
  \item The initial state $\sigma_0 \in \States{\omega}$ of the implemented object
  \item A state assignment for every base object
  \item A {\it procedure} describing how to implement every operation on $\omega$. Procedures are simply lists of statements, each of which may manipulate some base object of $\overline{\Omega}$.
\end{itemize}

% \begin{minted}[escapeinside=~~]{coq}
%   Record Implementation (~$\Pi$~ ~$\overline{\Omega}$~ : Type) {~$\Omega$~} `{Object ~$\Pi$~ ~$\Omega$~, Object ~$\Pi$~ ~$\overline{\Omega}$~} (~$\omega$~ : ~$\Omega$~) := {
%     initial_state : ~$\States{\omega}$~
%     initial_states : ~$\forall$~ ~$\overline{\omega}$~ : ~$\overline{\Omega}$~, ~$\Sigma_\omega$~;
%     procedures : ~$\OP{\omega}$~ → list (Stmt.t ~$\Pi$~ ~$\overline{\Omega}$~);
%   }.
% \end{minted}

Having defined the syntax of implementations, we now describe their dynamics. We give semantics to both
terms and statements via a big-step evaluation dynamics. Evaluating statements can alter both local and shared state. Additionally, evaluating a statement may result
executing a return or goto, and thus this control ought to be reflected in their dynamics. We represent control
transfer explicitly as the following data type:

\begin{minted}[escapeinside=~~]{coq}
Variant signal :=
  | Continue
  | Goto (l : nat)
  | Return (v : Value.t).
\end{minted}

In addition to mutating shared and local state, executing a statement produces a signal that determines
what line will be executed when $\pi$ next takes another step. The dynamics takes the form
$\langle \pi , \mathit{arg} , \psi , \epsilon , s \rangle \Downarrow_s \langle \psi' , \epsilon' , \mathit{sig} \rangle$, where:
\begin{itemize}
  \item $\pi$ is the process evaluating the term
  \item $\mathit{arg}$ is the argument provided to the high-level operation
  \item $\psi : \mathit{Var} \rightharpoonup \Val$ is valuation of the local store of $\pi$
  \item $\epsilon : \forall \omega : \overline{\Omega}, \States{\omega}$ is the initial base object state assignment before evaluating $s$
  \item $s$ is the statement being evaluated 
  \item $\epsilon' : \forall \omega : \overline{\Omega}, \States{\omega}$ is the updated base object state assignment resulting from evaluating $s$
  \item $\mathit{sig}$ is the resulting signal
\end{itemize}

Rules defining the dynamics appear in \Cref{fig:sem:stmt}. We omit a description of the term dynamics here, as they are typical.

\begin{figure}[htbp]
\centering
\def \MathparLineskip {\lineskip=0.75em}
\begin{mathpar}

% \inferrule[Skip]
%     { }
%     { \stmteval{\pi}{\mathit{arg}}{\psi}{\epsilon}{\Skip}{\psi}{\epsilon}{\Cont} }

% \\\\

  % \inferrule[Seq-Goto]
  %   { \stmteval{\pi}{\mathit{arg}}{\psi}{\epsilon}{s_1}{\psi_1}{\epsilon_1}{\GotoSig{n}} }
  %   { \stmteval{\pi}{\mathit{arg}}{\psi}{\epsilon}{\Seq{s_1}{s_2}}{\psi_1}{\epsilon_1}{\GotoSig{n}} }
% \\\\

  \inferrule[Seq-Cont]
    { \stmteval{\pi}{\mathit{arg}}{\psi}{\epsilon}{s_1}{\psi_1}{\epsilon_1}{\Cont} \\
      \stmteval{\pi}{\mathit{arg}}{\psi}{\epsilon_1}{s_2}{\psi_2}{\epsilon_2}{sig} }
    { \stmteval{\pi}{\mathit{arg}}{\psi}{\epsilon}{\Seq{s_1}{s_2}}{\psi_2}{\epsilon_2}{sig} }
\\\\

  \inferrule[If-True]
    { \termeval{\pi}{\mathit{arg}}{\psi}{\epsilon}{e}{\epsilon'}{\mathsf{true}} \\
      \stmteval{\pi}{\mathit{arg}}{\psi}{\epsilon'}{s_1}{\psi_1}{\epsilon_1}{sig} }
    { \stmteval{\pi}{\mathit{arg}}{\psi}{\epsilon}{\IfStmt{e}{s_1}{s_2}}{\psi_1}{\epsilon_1}{sig} }
\\\\

%   \inferrule[If-False]
%     { \termeval{\pi}{\mathit{arg}}{\psi}{\epsilon}{e}{\epsilon'}{\mathsf{false}} \\
%       \stmteval{\pi}{\mathit{arg}}{\psi}{\epsilon'}{s_2}{\psi_2}{\epsilon_2}{sig} }
%     { \stmteval{\pi}{\mathit{arg}}{\psi}{\epsilon}{\If{e}{s_1}{s_2}}{\psi_2}{\epsilon_2}{sig} }
% \\\\

  \inferrule[Assign]
    { \termeval{\pi}{\mathit{arg}}{\psi}{\epsilon}{e}{\epsilon'}{v} }
    { \stmteval{\pi}{\mathit{arg}}{\psi}{\epsilon}{\Assign{x}{e}}{\Sub{v}{x}{\psi}}{\epsilon'}{\Cont} }

  \inferrule[Goto]
    { }
    { \stmteval{\pi}{\mathit{arg}}{\psi}{\epsilon}{\GotoStmt{n}}{\psi}{\epsilon}{\GotoSig{n}} }
\\\\

  \inferrule[Seq-Ret]
    { \stmteval{\pi}{\mathit{arg}}{\psi}{\epsilon}{s_1}{\psi_1}{\epsilon_1}{\Ret{v}} }
    { \stmteval{\pi}{\mathit{arg}}{\psi}{\epsilon}{\Seq{s_1}{s_2}}{\psi_1}{\epsilon_1}{\Ret{v}} }

  \inferrule[Invoke]
    { \termeval{\pi}{\mathit{arg}}{\psi}{\epsilon}{\TermInvoke{\omega}{op}{e}}{\epsilon'}{v} }
    { \stmteval{\pi}{\mathit{arg}}{\psi}{\epsilon}{\TermInvoke{\omega}{op}{e}}{\psi}{\epsilon'}{\Cont} }

  % \inferrule[Return]
  %   { \termeval{\pi}{\mathit{arg}}{\psi}{\epsilon}{e}{\epsilon'}{v} }
  %   { \stmteval{\pi}{\mathit{arg}}{\psi}{\epsilon}{\ReturnStmt{e}}{\psi}{\epsilon'}{\Ret{v}} }
\end{mathpar}
\caption{Selected evaluation dynamics for statements}
\label{fig:sem:stmt}

\end{figure}

When executing a procedure implementing an operation, a process maintains its local state within a {\it frame}.
A frame comprises a valuation for the local store, the value of the program counter, the pending operation,
and the argument with which the pending operation was invoked. It is represented by the following type:

\begin{minted}[escapeinside=~~]{coq}
  Record frame ~$\Pi$~ {~$\Omega$~} `{Object ~$\Pi$~ ~$\Omega$~} (~$\omega$~ : ~$\Omega$~) : Type := {
    op : (type ~$\omega$~).(OP);
    pc : ~$\mathbb{N}$~;
    arg : Val;
    registers : Var ~$\rightharpoonup$~ Val;
  }.
\end{minted}

Thus, when process $\pi$ executes the statement pointed to by its program counter, it steps from one frame to another 
(or none, if the statement is a return). We represent this outcome--either a next frame, or a return with value $v$--as the following sum type.

\begin{minted}[escapeinside=~~]{coq}
Variant procedure_signal ~$\Pi$~ {~$\Omega$~} `{Object ~$\Pi$~ ~$\Omega$~} ~$\Omega$~ :=
  | Next (f : frame ~$\Pi$~ ~$\Omega$~) (* On next step, go to line ~$l$~ *)
  | Return (v : Val).
\end{minted}

We now give a big-step dynamics for frames. Given a shared state assignment $\epsilon$, frame $f$
will evaluate down to a new shared state assignment $\epsilon'$ and procedure signal $\mathit{sig}$. The dynamics then
takes the form $\langle \pi, \epsilon, f \rangle \Downarrow_f \langle \epsilon, \mathit{sig} \rangle$.
It is defined by three rules: one for each signal the execution of the line pointed to by the current program counter of $\pi$ may produce. If executing this statement produces
results in $\Cont$, then the program counter is incremented in the resulting frame. If the signal
is $\GotoSig{\mathit{pc'}}$, the program counter is set to $\mathit{pc'}$. And if the signal is $\Ret{v}$, then that
signal is propagated.

\subsection{Global dynamics}

So far, we have only discussed the dynamics of atomically executed statements. We now describe the behavior of
concurrent execution. When the scheduler chooses a process to take a step, it transforms one configuration to another. Having previously only discussed abstract configurations, we now
make them concrete. A configuration comprises both shared global state (a state assignment for every base object)
and the local state valuation for each process. We represent local state as a partial mapping between processes and frames. If a process is associated with
no frame, it has no outstanding operation. It is represented by the following record type:

\begin{minted}[escapeinside=~~]{coq}
  Record configuration ~$\Pi$~ ~$\overline{\Omega}$~ {~$\Omega$~} `{Object ~$\Pi$~ ~$\overline{\Omega}$~, Object ~$\Pi$~ ~$\Omega$~} (~$\omega$~ : ~$\Omega$~) := {
    outstanding : ~$\Pi$~ ~$\rightharpoonup$~ frame ~$\Pi$~ ~$\omega$~;
    ~$\epsilon$~ : ~$\forall \overline{\omega} : \overline{\Omega}$~, ~$\Sigma_{\overline{\omega}}$~;
  }.
\end{minted}

As noted, every process with an outstanding operation is associated with a frame defining the corresponding local state associated with that
operation. Thus, a \textit{configuration}---a snapshot of system-wide state at a discrete time step---comprises both a
partial mapping between processes and frames and a shared-state valuation. 

We now define a small-step dynamics of the form $\langle \chi, \epsilon, \pi, l \rangle \mapsto \langle \chi', \epsilon' \rangle$
that describes how the global shared state $\epsilon$ and local state $\chi$ evolves to $\epsilon'$ and $\chi'$, respectively,
when process $\pi$ executes line $l$. It is defined by the rules in \Cref{fig:sem:conf}. If $\pi$ has no outstanding
operation, it can invoke a new operation. If $\pi$ does have an outstanding operation, it can execute the line of code pointed
to by its program counter. If executing that line returns a value $v$, this yields a response of $v$ from the outstanding operation.
Otherwise, the operation remains outstanding. 

Recall that a run is an alternating sequence of configurations and atomic steps. However, this only characterizes the syntactic nature of a run.
Some runs that are syntactically valid obviously have no meaning; it ought to be the case that each step in the run respects the dynamics we defined.
We consider runs that do respect these dynamics to be \textit{well-formed}. Formally, fixing an implementation \texttt{impl}, we define a property $\WF{\texttt{impl}}{-}$
to be the strongest property $\mathcal{P}$ of runs such both of the following hold:

\begin{enumerate}
  \item $\mathcal{P}(\Initial{c})$ holds iff the shared state valuation of $c$ is equal to the initialization specified by the implementation and the mapping between processes and frames specified by $c$ is empty.
  \item Fix any run $r$ and assume the final configuration of $r$ has a frame mapping $\chi$ and shared state valuation $\epsilon$. Let $c'$ be another configuration with
        frame mapping $\chi'$ and shared state valuation $\epsilon'$. Then, if $\mathcal{P}(r)$ and $\langle \chi, \epsilon, \pi, l \rangle \mapsto \langle \chi', \epsilon' \rangle$, then
        $\mathcal{P}(\Step{r}{\pi}{l}{c'})$ holds as well.
\end{enumerate}

We encode this inductively defined property as an inductively defined propositon in our mechanization.

\begin{figure}[htbp]
  \def \MathparLineskip {\lineskip=0.75em}
  \begin{mathpar}
  
  \inferrule[Step-Invoke]
      { \pi \notin \dom{\chi} }
      { \langle \chi, \epsilon, \pi, \Invoke{\mathit{op}}{\mathit{arg}} \rangle \mapsto \langle \Sub{\InitialFrame{\mathit{op}}{\mathit{arg}}}{\pi}{\chi}, \epsilon \rangle }
  
  \inferrule[Step-Intermediate]
      { x \in \dom{\chi} \\ \chi(x) = f \\ \langle \pi, \epsilon, f \rangle \Downarrow_f \langle \epsilon', \Next{f'} \rangle }
      { \langle \chi, \epsilon, \pi, \Intermediate \rangle \mapsto \langle \Sub{f'}{\pi}{\chi}, \epsilon' \rangle }

  \inferrule[Step-Response]
      { x \in \dom{\chi} \\ \chi(x) = f \\ \langle \pi, \epsilon, f \rangle \Downarrow_f \langle \epsilon', \Ret{v} \rangle }
      { \langle \chi, \epsilon, \pi, \Response{v} \rangle \mapsto \langle \Sub{\bot}{\pi}{\chi}, \epsilon' \rangle }
  \end{mathpar}
  \caption{Global Dynamics}
  \label{fig:sem:conf}
  
  \end{figure}

We now move onto consider the correctness of an implementation. As discussed in \Cref{sec:prelims},
\textit{linearizability} is considered the gold standard for the correctness of concurrent data structures. However, Herlihy and Wing's original definition is not particularly amenable to verification. As a result,
other (provably equivalent) definitions have been proposed. We use Lynch's definition \cite{LynchBOOK1996}, which relates the
behavior of a concurrent object to a reference \textit{atomic}  object as time moves forward.

Formally, the \textit{atomic implementation} of an object $\omega \in \Omega$ takes $\Omega$ as the set of base objects.
The implementation itself is trivial. The procedure assigned to every $\mathit{op} \in \OP{\omega}$ consists of only two lines:

\begin{enumerate}
  \item An invocation of $\mathit{op}$, and an assignment of the result to a return variable $r$
  \item A return of $r$
\end{enumerate}

We call any run of this implementation an \textit{atomic run}. Consider the structure of any such atomic run.
Every outstanding operation will execute exactly one intermediate line between its invocation
and response. This is the \textit{linearization point} of the operation: the point at which it takes effect.
Thus, at any point in this run, every process $\pi \in \Pi$ is in exactly one of three states:

\begin{itemize}
  \item \texttt{Idle}, meaning that that process has no outstanding operation
  \item \texttt{Pending} with operation \texttt{op} and argument \texttt{arg}, meaning that $\pi$ has an outstanding operation that has not yet linearized (i.e. taken effect)
  \item \texttt{Linearized} with value $v$, meaning that $\pi$ has outstanding operation that has linearized with value $v$.
\end{itemize}

We encode this in the following sum type:

\begin{minted}[escapeinside=~~]{coq}
Variant status ~$\Pi$~ {~$\Omega$~} `{Object ~$\Pi$~ ~$\Omega$~} (~$\omega$~ : ~$\Omega$~) :=
  | Idle
  | Pending (op : ~$\OP{\omega}$~) (arg : Val)
  | Linearized (res : Val).
\end{minted}

Thus, at any point in an atomic run, the global state can be described by two components:

\begin{enumerate}
  \item The state $\sigma$ of the implemented object $\omega$
  \item A mapping $f : \Pi \to \status{\Pi}{\omega}$ between processes and their statuses
\end{enumerate}

An \textit{atomic configuration} is such a pair $(\sigma, f)$. A step in an atomic run will then
transform one atomic configuration into another. We now define a dynamics $\leadsto$ formalizing
valid steps in an atomic run such that $\langle \sigma, f, \pi,  l \rangle \leadsto \langle \sigma', f' \rangle$
iff process $\pi$ executing $l$ in atomic configuration $(\sigma, f)$ results in atomic configuration $(\sigma', f')$.
It is defined by the rules in \Cref{fig:sem:atomic}. As expected, there are three rules corresponding to the invocation, linearization, and response of an operation.
If $\pi$ is idle, it can invoke an operation. If $\pi$ is associated with pending operation, it can linearize that operation.
And finally, if $\pi$ has a pending operation that has linearized with value $v$, the operation can respond with value $v$.
As before, we can also consider the well-formedness of atomic runs with respect to these dynamics; well-formed runs are
those whose initial configuration is indeed initial, and whose every step respects these dynamics. Formally, we define a predicate $\WF{\mathtt{atomic}}{-}$ on atomic runs
to be the strongest property $\mathcal{P}$ such that both of the following are true:

\begin{enumerate}
  \item $\mathcal{P}(\Initial{(\sigma, f)})$ is true iff $\sigma = \sigma_0$, the initialization of $\omega$, and $f(\pi) = \Idle$ for all $\pi \in \Pi$.
  \item Consider any run $r$ and let the final configuration of $r$ be $(\sigma, f)$. Then, if $\mathcal{P}(r)$ holds and $\langle \sigma, f, \pi,  l \rangle \leadsto \langle \sigma', f' \rangle$, then $\mathcal{P}(\Step{r}{\pi}{l}{(\sigma', f')})$ holds as well.
\end{enumerate}

\begin{figure}[htbp]
  \def \MathparLineskip {\lineskip=0.75em}
  \begin{mathpar}
  
  \inferrule[Step-Invoke]
      { f(\pi) = \Idle }
      { \langle \sigma, f, \pi, \Invoke{\mathit{op}}{\mathit{arg}} \rangle \leadsto \langle \sigma, \Sub{\Pending{\mathit{op}}{\mathit{arg}}}{\pi}{f}, \rangle }
  
  \inferrule[Step-Intermediate]
      { f(\pi) = \Pending{\mathit{op}}{\mathit{arg}} \\ (\sigma, \pi, \mathit{op}, \mathit{arg}, \sigma', \mathit{v}) \in \delta_\omega  }
      { \langle \sigma, f, \pi, \Intermediate \rangle \leadsto \langle \sigma', \Sub{\Linearized{v}}{\pi}{f} \rangle  }

  \inferrule[Step-Response]
      { f(\pi) = \Linearized{v} }
      { \langle \sigma, f, \pi, \Response{v} \rangle \leadsto \langle \sigma, \Sub{\bot}{\pi}{f}, \rangle }
  \end{mathpar}
  \caption{Semantics for steps taken in atomic runs}
  \label{fig:sem:atomic}
  
  \end{figure}

\subsection{Linearizability}

With these definitions, we can now formalize when an implementation is \textit{linearizable}.
The \textit{behavior} of a run is the subsequence of that run consisting only of its invocation and response events;
it is analogous to a history in Herlihy-Wing linearizability. An implementation is \textit{linearizable} iff
the behavior of every valid run it admits is equal (syntactically) to the behavior of some atomic run.
We refer to this witness as a \textit{linearization}. Intuitively, each intermediate line executed by an operation in the atomic run is the linearization
point of the same operation of the concurrent run. Lynch proves that this definition of linearizability
is equivalent to Herlihy and Wing's original definition \cite{LynchBOOK1996} (Theorems 13.4 and 13.5). To mechanize these ideas, we define
a function \texttt{behavior} that computes the behavior of a run, represented by a list of $(\pi, l)$
pairs that are the events of that run. It simply filters out all intermediate lines. Now, we can mechanize \textit{linearizability} as described informally above:

\begin{minted}[escapeinside=~~]{coq}
  Definition linearizable impl :=
    ~$\forall$~ r, ~$\WF{\mathtt{impl}}{\mathtt{r}}$~ ~$\to$~
      ~$\exists$~ r~$_\texttt{atomic}$~,
        ~$\WF{\mathtt{atomic}}{\mathtt{r}_\mathtt{atomic}}$~ ~$\wedge$~ behavior r = behavior r~$_\texttt{atomic}$~
\end{minted}

\section{Proof technique and metatheory}
\label{sec:tracker}
Now that we have mechanized linearizability itself, we do the same with Jayanti et al.'s technique.
Informally, we will augment every run with "ghost" state that tracks the possible linearizations of that run
but has no effect on the execution of the program. More specifically, a \textit{metaconfiguration} is a
set of atomic configurations, each of which is the final configuration of a linearization of the current run.
If this set is non-empty at every step in every run admitted by an implementation, then there exists
at least one linearization of every run, and by definition the implementation is thus linearizable.

We now work towards mechanizing this proof technique. For the remainder of this section, fix a
set of processes $\Pi$ and an implementation \texttt{impl} for object 
implemented object $\omega \in \Omega$ with a set of base objects $\overline{\Omega}$.
Recall that an atomic configuration is simply a pair $(\sigma, f) \in \States{\omega} \times (\Pi \to \status{\Pi}{\omega})$
consisting of a state of $\omega$ and a mapping between processes and their statuses.
We represent a set of atomic configurations as a relation $C : \States{\omega} \to (\Pi \to \status{\Pi}{\omega}) \to \Prop$
such that $(\sigma, f) \in C$ iff $C \; \sigma \; f$ is derivable.

\begin{figure}[htbp]
  \def \MathparLineskip {\lineskip=0.75em}
  \begin{mathpar}
  
  \inferrule[Refl]
      { }
      { (\sigma, f, \cdot, \sigma, f) \in \delta^\ast_\omega }
  
  \and
  \inferrule[Step]
      { (\sigma, f, \vec{\pi}, \sigma', f') \in \delta^\ast_\omega \\ f'(\pi) = \Pending{\mathit{op}}{\mathit{arg}} \\ (\sigma', \pi, \mathit{op}, \mathit{arg}, \sigma'', \mathit{v}) \in \delta_\omega}
      { (\sigma, f, \vec{\pi} \circ \pi, \sigma'', \Sub{\Linearized{v}}{\pi}{f'}) \in \delta^\ast_\omega }
  \end{mathpar}
  \caption{Multistep Transition Relation}
  \label{fig:multistep}
  
  \end{figure}

We now consider how this ghost state should change upon execution of each line in the program.
The metaconfiguration, at any point in a run $r$, should contain exactly the final configurations
of possible linearizations of $r$. So, we must consider how the set of possible linearizations changes
upon execution of a given line. When process $\pi$ executes line $l$, an arbitrary sequence of processes
with pending operations may linearize directly afterwards. To express this notion, we define a
\textit{multistep} transition relation with signature
$\delta^* \subseteq \States{\omega} \times (\Pi \to \status{\Pi}{\omega}) \times \Pi^\ast \times \States{\omega} \times (\Pi \to \status{\Pi}{\omega})$
such that $(\sigma, f, \vec{\pi}, \sigma', f') \in \delta^\ast$ iff atomic configuration $(\sigma', f')$ results from
linearizing sequence of processes $\vec{\pi}$, starting from atomic configuration $(\sigma, f)$. It is defined
inductively by the rules in \Cref{fig:multistep}. There are two cases:

\begin{itemize}
  \item For the base case, linearizing zero processes does not
  change the atomic configuration; that is, for arbitrary $(\sigma, f)$, we have $(\sigma, f, \cdot, \sigma, f) \in \delta^\ast$.
  \item Inductively, assume that linearizing $\vec{\pi}$ results in atomic configuration
  $(\sigma', f')$, and $\pi$ is outstanding with pending operation $\mathit{op}(\mathit{arg})$. Further, suppose that
  $(\sigma', \pi, op, arg, \sigma'', res) \in \delta_\omega$; that starting in state $\sigma'$, $\pi$ linearizing operation $\mathit{op}$
  with argument $\mathit{arg}$ transitions $\omega$ to state $\sigma'$ and produces result $\mathit{res}$. Then,
  $\pi$ can be linearized directly after $\vec{\pi}$, and we have that $(\sigma, f, \vec{\pi} \circ \pi, \sigma'', \Sub{\Linearized{res}}{\pi}{f'})$.
\end{itemize}

Given this multistep transition relation, we define a function $\LinearizePending$ that maps
one metaconfiguration $C$ to another containing exactly the atomic configurations that can be obtained
by linearizing some number of pending processes, starting from some atomic configuration $(\sigma, f) \in C$.
Formally, we have that $(\sigma', f') \in \LinearizePending \ C$ iff there exists some $(\sigma, f) \in C$
and $\vec{\pi} \in \Pi^\ast$ where $(\sigma, f, \vec{\pi}, \sigma', f') \in \delta^\ast$.

Before an arbitrary sequence of processes are allowed to linearize, however, the execution of line $l$ itself
may refine the set of possible linearizations. For example, if an operation responds with value $v$, then
all atomic configurations reflecting the linearization of that operation with a different value must be discarded.
We consider the effect of each different kind of line: invocations, intermediate lines, and responses.

\begin{itemize}
  \item Case $\Invoke{\textit{op}}{\textit{arg}}$: Every atomic configuration $(\sigma, f)$ where $f(\pi) = \Idle$ should 
        be mapped to $(\sigma, \Sub{\Pending{\textit{op}}{\textit{arg}}}{\pi}{f})$. The state $\sigma$ remains unchanged in the resulting atomic configuration,
        as invocations do not modify the state; the state only changes when operations linearize. We define a function $\EvolveInv$ from metaconfigurations to metaconfigurations respecting this behavior:
        \[
        \EvolveInv(C, \mathit{op}, \mathit{arg}) \triangleq \{ (\sigma, \Sub{\Pending{\mathit{op}}{\mathit{arg}}}{\pi}{f}) \mid (\sigma, f) \in C \wedge f(\pi) = \Idle\}
        \]
  \item Case $\Intermediate$: The execution of intermediate lines themselves does not change the set of possible linearizations.
  \item Case $\Response{v}$: As discussed above, all atomic configurations that reflect the operation linearizing with some $w \neq v$ should be discarded. Those remaining---configurations $(\sigma, f)$ where $f(\pi) = \Linearized{v}$---should be mapped to $(\sigma, \Sub{\pi}{\Idle}{f})$, reflecting that $\pi$ no longer is outstanding. As for the invocation case, we define a function $\EvolveRet$ respecting this behavior:
        \[
        \EvolveRet(C, v) \triangleq \{ (\sigma, \Sub{\Idle}{\pi}{f}) \mid (\sigma, f) \in C \wedge f(\pi) = \Linearized{v}\}
        \]
\end{itemize}

With this specification, we can now define a function $\Evolve$
such that $\Evolve \; \pi \; l \; C$ is the metaconfiguration resulting from process $\pi$ executing line $l$ given current metaconfiguration $C$.
It is defined by cases on line $l$:

\begin{align*}
  \Evolve \; \pi \; (\Invoke{\mathit{op}}{\mathit{arg}}) \; C &\triangleq \LinearizePending(\EvolveInv(C, \mathit{op}, \mathit{arg}))\\
  \Evolve \; \pi \; \Intermediate \; C &\triangleq \LinearizePending(C)\\
  \Evolve \; \pi \; (\Response{v}) \; C &\triangleq \LinearizePending(\EvolveRet(C, v))
\end{align*}

\subsection{Tracking Technique}

Previously, in \Cref{sec:model}, configurations only described the local state for every process, alongside a state assignment for the shared global state.
Now, however, we must also maintain the state of the tracker in the configuration. We encode this in the following type, comprising a
\textit{base configuration} encapsulating the concrete shared and local state, and a metaconfiguration representing the tracker.

\begin{minted}[escapeinside=~~]{coq}
  Record configuration ~$\Pi$~ ~$\overline{\Omega}$~ {~$\Omega$~} `{Object ~$\Pi$~ ~$\overline{\Omega}$~, Object ~$\Pi$~ ~$\Omega$~} (~$\omega$~ : ~$\Omega$~) := {
    base_configuration : Implementation.configuration ~$\Pi$~ ~$\Omega$~ ~$\omega$~;
    tracker : meta_configuration ~$\Pi$~ ~$\omega$~;
  }.
\end{minted}

As before, we can also define a semantics relating successive steps in an augmented run. Informally,
we have that augmented configuration $C$ transitions to augmented configuration $C'$ upon
execution of line $l$ by process $\pi$ iff the semantics of the implementation dictates
that the base configuration of $C$ transitions to the base configuration of $C'$ upon
execution of line $l$ by process $\pi$, and the tracker of $C'$ is equal to
$\Evolve \; C \; \pi \; l$. As before, we define a predicate $\WF{\mathtt{aug}}{-}$ on augmented
runs classifying those which are well-formed. The tracker of the initial configuration of a
well-formed augmented run is initialized to contain exactly the atomic configuration $(\sigma_0, f)$ where
$f(\pi) = \Idle$ for all $\pi \in \Pi$ and $\sigma_0$ is the initialization of $\omega$. With these definitions, we are now ready to state the main metatheoretic result.

\begin{restatable}[Adequacy]{theorem}{adequacy}
  An implementation is linearizable if and only if the tracker of the final configuration of every augmented
  run admitted by that implementation is inhabited.
\end{restatable}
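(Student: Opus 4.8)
The plan is to reduce the theorem to a single invariant that pins down exactly what the tracker contains at the end of an augmented run, and then to obtain both directions of the biconditional from short arguments that move between base runs and augmented runs.

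\emph{Core characterization of the tracker.} I would first prove: for every well-formed augmented run $\hat{r}$, writing $C$ for the tracker of its final configuration, an atomic configuration $(\sigma, f)$ satisfies $C\,\sigma\,f$ if and only if there exists a well-formed atomic run $r_a$ with $\behavior{r_a} = \behavior{\hat{r}}$ whose final atomic configuration is $(\sigma, f)$. That is, the tracker is exactly the set of final configurations of atomic runs that agree with $\hat{r}$ on all invocation and response events so far. This statement mentions only behaviors, never the concrete base state, so it is self-contained, and I would prove it by induction on the structure of $\hat{r}$. In the base case ($\hat{r}$ of the form $\Initial{\cdot}$, empty behavior, $C$ the singleton $\{(\sigma_0, f)\}$ with $f(\pi) = \Idle$ for all $\pi$) one uses the auxiliary fact that the only well-formed atomic run with empty behavior is the initial one: every atomic step is either an invocation or response (which extends the behavior) or an intermediate/linearization step (which requires a pending status and hence a prior invocation, making the behavior non-empty). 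In the step case I case-split on whether the new line is $\Invoke{\mathit{op}}{\mathit{arg}}$, $\Intermediate$, or $\Response{v}$, matching the three clauses of $\Evolve$.

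The crux of the step case is a decomposition lemma for atomic runs: any well-formed atomic run whose behavior is non-empty and ends with an event $e$ splits as a well-formed atomic run $r'$ whose behavior is that of the original run with its last event $e$ deleted, followed by the single atomic step that performs $e$ from the final configuration of $r'$, followed by a block of linearization steps (a $\delta^\ast$-sequence). I would prove this by induction on the atomic run, inspecting its last step: if that step is intermediate the behavior is unchanged, so I recurse and append that step to the trailing $\delta^\ast$-block; if that step is $e$ itself, then $r'$ is the rest of the run and the trailing block is empty. Feeding this decomposition into the outer induction, together with the definitions of $\EvolveInv$ and $\EvolveRet$ and the observation that $\LinearizePending$ is exactly ``closure under appending a $\delta^\ast$-block'', discharges both inclusions of the characterization in each case: a single atomic invocation step from $(\sigma_c, f_c)$, legal exactly when $f_c(\pi) = \Idle$, is precisely what $\EvolveInv$ records, and a single atomic response step, legal exactly when $f_c(\pi) = \Linearized{v}$, is precisely what $\EvolveRet$ records. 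The $\Intermediate$ case is easiest: an intermediate step changes neither the behavior of $\hat{r}$ nor the set described by the induction hypothesis (which is already closed under $\LinearizePending$), so $\Evolve\;\pi\;\Intermediate\;C = \LinearizePending(C) = C$ and nothing changes.

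It remains to connect base runs with augmented runs and assemble the biconditional. Two easy inductions on run structure give: (i) every well-formed base run lifts to a well-formed augmented run with the same behavior, obtained by recomputing the tracker along the run using $\Evolve$ --- well-formedness is immediate since the base-configuration steps are unchanged and the tracker clauses hold by construction --- and (ii) projecting a well-formed augmented run onto its underlying base run yields a well-formed base run with the same behavior. For the forward direction of the theorem, assume the implementation is linearizable and take any well-formed augmented run $\hat{r}$; project it to a well-formed base run $r$ with $\behavior{r} = \behavior{\hat{r}}$, obtain from linearizability a well-formed atomic run $r_a$ with $\behavior{r_a} = \behavior{r}$, and apply the characterization (right to left) to place the final configuration of $r_a$ in the tracker of $\hat{r}$, which is therefore inhabited. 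For the converse, assume every well-formed augmented run has an inhabited final tracker and take any well-formed base run $r$; lift it to $\hat{r}$ with $\behavior{\hat{r}} = \behavior{r}$, pick some $(\sigma, f)$ in its inhabited tracker, and apply the characterization (left to right) to obtain a well-formed atomic run whose behavior equals $\behavior{r}$, which is exactly the linearization witness required. The main obstacle is the characterization lemma and, within it, the decomposition and recomposition of atomic runs into an agreeing prefix, a final invocation or response, and a trailing block of linearizations; getting that bookkeeping right --- in particular that the linearization steps trailing the last invocation/response event are exactly those accounted for by $\LinearizePending$, and that they do not interfere with the split at the next event --- is where essentially all of the work lies, the remainder being routine inversions and unfoldings.
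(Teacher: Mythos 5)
Your proposal is correct and follows essentially the same route as the paper: your characterization lemma is exactly the conjunction of the paper's soundness and completeness lemmas (the tracker contains only, and all, final configurations of linearizations), proved by the same structural induction on the run with the trailing block of linearizations absorbed by $\LinearizePending$, and your embed/project bookkeeping between base and augmented runs matches the paper's. The only difference is organizational — you package the two lemmas as a single biconditional and spell out the decomposition argument for the completeness direction, which the paper omits.
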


We can formalize the statement of this theorem in Rocq as follows:

\begin{minted}[escapeinside=~~]{coq}
  Theorem adequacy : linearizable impl ~$\leftrightarrow$~ ~$\forall r$~, ~$\exists$~ ~$\sigma$~ f, (final r).(tracker) ~$\sigma$~ f
\end{minted}

The forward direction of this double implication is \textit{soundness}; that an implementation can only be proved linearizable
if it is indeed linearizable. The reverse direction is \textit{completeness}; that every linearizable implementation can be proved
such. We now provide a pen and paper proof of this theorem, mirroring our mechanization. First, we prove a lemma whose result implies soundness.
For convenience, let $\mathcal{M}(r)$ denote the tracker associated with the final configuration of run $r$ below.

\begin{lemma}\label{lem:soundness}
  Consider an arbitrary augmented run $r$. For any atomic configuration $(\sigma, f)$, if $(\sigma, f) \in \mathcal{M}(r)$, then $(\sigma, f)$ is the final configuration of some linearization of $r$.
\end{lemma}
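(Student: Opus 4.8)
The plan is to prove \Cref{lem:soundness} by induction on $r$ (using the well-formedness of the augmented run), after first isolating a short helper lemma that handles the $\LinearizePending$ part of $\Evolve$. The helper states: if $r_a$ is a well-formed atomic run with $\final{r_a} = (\sigma, f)$ and $(\sigma, f, \vec{\pi}, \sigma', f') \in \delta^\ast_\omega$, then there is a well-formed atomic run $r_a'$ with $\final{r_a'} = (\sigma', f')$ and $\behavior{r_a'} = \behavior{r_a}$. This follows by induction on the derivation of $(\sigma, f, \vec{\pi}, \sigma', f') \in \delta^\ast_\omega$ (\Cref{fig:multistep}): the $\textsc{Refl}$ case takes $r_a' = r_a$; in the $\textsc{Step}$ case we use the inductive hypothesis to reach the intermediate configuration and then append one atomic step in which the newly linearizing process executes an $\Intermediate$ line. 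Its side conditions — that this process is $\Pending{\mathit{op}}{\mathit{arg}}$ and $(\sigma', \pi, \mathit{op}, \mathit{arg}, \sigma'', v) \in \delta_\omega$ — are exactly the premises of the $\textsc{Step}$ rule of $\delta^\ast_\omega$, and they are exactly what the $\textsc{Step-Intermediate}$ rule of the atomic dynamics $\leadsto$ (\Cref{fig:sem:atomic}) demands; hence well-formedness is preserved by clause~2 of $\WF{\mathtt{atomic}}{-}$, and $\behavior{}$ is unchanged since it filters out intermediate lines. Combined with the definition of $\LinearizePending$, the helper says: if $(\sigma', f') \in \LinearizePending(C)$ and some $(\sigma, f) \in C$ is the final configuration of a linearization of a run, then so is $(\sigma', f')$ — the linearization is extended only by intermediate atomic steps, which leaves the behavior untouched.

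For the base case $r = \Initial{c}$, the tracker $\mathcal{M}(r)$ is the singleton $\{(\sigma_0, g)\}$, where $\sigma_0$ is the initialization of $\omega$ and $g(\pi) = \Idle$ for every $\pi$; then $\Initial{(\sigma_0, g)}$ is a well-formed atomic run by clause~1 of $\WF{\mathtt{atomic}}{-}$, has the same (empty) behavior as $r$, and has final configuration $(\sigma_0, g)$. In the inductive step $r = \Step{r'}{\pi}{l}{c'}$ we have $\mathcal{M}(r) = \Evolve\;\pi\;l\;\mathcal{M}(r')$, and we case on $l$; in each case the inductive hypothesis supplies, for any $(\sigma_0, f_0) \in \mathcal{M}(r')$, a linearization $r_a'$ of $r'$ with $\final{r_a'} = (\sigma_0, f_0)$. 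If $l = \Intermediate$, then $(\sigma, f) \in \LinearizePending(\mathcal{M}(r'))$, so we extract the relevant $(\sigma_0, f_0) \in \mathcal{M}(r')$, apply the inductive hypothesis, and apply the helper; since $\behavior{r} = \behavior{r'}$ (intermediate lines are filtered), the resulting run is already a linearization of $r$.

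If $l = \Invoke{\mathit{op}}{\mathit{arg}}$, unfolding $\Evolve$ and then $\EvolveInv$ yields some $(\sigma_0, f_0) \in \mathcal{M}(r')$ with $f_0(\pi) = \Idle$, together with $\vec{\pi}$ and $(\sigma_0, \Sub{\Pending{\mathit{op}}{\mathit{arg}}}{\pi}{f_0}, \vec{\pi}, \sigma, f) \in \delta^\ast_\omega$. We take the linearization of $r'$ reaching $(\sigma_0, f_0)$, append one atomic step in which $\pi$ invokes $\mathit{op}$ with argument $\mathit{arg}$ (legal by $\textsc{Step-Invoke}$ since $f_0(\pi) = \Idle$) to reach $(\sigma_0, \Sub{\Pending{\mathit{op}}{\mathit{arg}}}{\pi}{f_0})$, and then invoke the helper. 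The behavior of the resulting atomic run is $\behavior{r_a'}$ with the single event $(\pi, \Invoke{\mathit{op}}{\mathit{arg}})$ appended (the helper's intermediate steps contribute nothing), which is $\behavior{r'}$ with $(\pi, l)$ appended, i.e. $\behavior{r}$. The case $l = \Response{v}$ is symmetric: $\EvolveRet$ gives $(\sigma_0, f_0) \in \mathcal{M}(r')$ with $f_0(\pi) = \Linearized{v}$, we append one $\textsc{Step-Response}$ step for $\pi$ (legal since $f_0(\pi) = \Linearized{v}$) and then the helper, and again the appended behavior is exactly the single event $(\pi, \Response{v})$.

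I expect the main obstacle to be the behavior bookkeeping rather than any conceptual depth: in each branch of the $l$-case analysis one must check that the synthesized atomic run appends precisely the events that $\behavior{}$ appends for that kind of line — nothing for $\Intermediate$, and the single invoke/response event otherwise — and that every atomic step produced (the one explicit invoke or response step together with all the implicit $\Intermediate$ steps coming from $\delta^\ast_\omega$) discharges its $\WF{\mathtt{atomic}}{-}$ obligation. In essence the whole argument is a matter of lining up the three clauses of $\Evolve$ with the three rules of the atomic dynamics plus the helper lemma, and threading well-formedness through each appended step.
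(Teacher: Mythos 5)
Your proposal is correct and follows essentially the same route as the paper's proof: structural induction on the run, case analysis on the executed line, and an inner induction over the multistep relation $\delta^\ast_\omega$ to absorb the $\LinearizePending$ closure. The only difference is organizational---you factor that inner induction into a standalone behavior-preserving helper lemma reused across all three line cases, whereas the paper inlines it (as an induction on the length of $\vec{\pi}$) within the $\Invoke{\mathit{op}}{\mathit{arg}}$ case and omits the analogous arguments for the other two cases.
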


% This lemma states that the set of atomic configurations maintained by the tracker corresponds exactly to the set
% of linearizations of a given run.

\begin{proof} We proceed by structural induction on $r$.
    \begin{itemize}
      \item Case $\Initial{c}$: Then, the final configuration of $r$ is simply $c$ itself. By definition, the tracker is
            initialized to contain only $(\sigma_0, f)$, where $\sigma_0$ is the initial state of $\omega$ and
            $f(\pi) = \Idle$ for every $\pi \in \Pi$. This is the final configuration of an initial atomic run, which is clearly a linearization of $r$.
            Thus, this case is complete.
      \item Case $\Step{r'}{\pi}{l}{c}$. Consider any $(\sigma', f') \in \mathcal{M}(r)$. Because we assume that $r$ is well-formed,
      $\tracker{c}$ must result from the evolution of $\mathcal{M}(r)$; that is, we have that $\tracker{c} = \Evolve \; \pi \; l \; \mathcal{M}(r)$. Now case on $l$, considering the kind of line executed.
            \begin{itemize}
              \item Case $\Invoke{\mathit{op}}{\mathit{arg}}$: We have that $\tracker{c} = \Evolve \; \pi \; l \; \mathcal{M}(r)$ is defined to be
                    $\LinearizePending(\EvolveInv(\mathit{op}, \mathit{arg}, \mathit{\pi}))$ Thus, there must exist some $(\sigma, f) \in \mathcal{M}(r)$ and $\vec{\pi} \in \Pi^\ast$ such that
                    $(\sigma, \Sub{\Pending{\mathit{op}}{\mathit{arg}}}{\pi}{f}, \vec{\pi}, \sigma', f') \in \delta^\ast_\omega$. Now proceed by induction on the length of $\vec{\pi}$.
                    \begin{itemize}
                      \item Case $\vec{\pi} = \cdot$: Then, we have that $\sigma' = \sigma$ and $f' = \Sub{\Pending{\mathit{op}}{\mathit{arg}}}{\pi}{f}$. By the outer induction hypothesis, because $(\sigma, f) \in \mathcal{M}(r)$, there must exist some
                            linearization $r_\mathit{atomic}$ of $r$ whose final configuration is $(\sigma, f)$. It then follows that
                            $\Step{r_\mathit{atomic}}{\pi}{(\Invoke{\mathit{op}}{\mathit{arg}})}{(\sigma, \Sub{\Pending{\mathit{op}}{\mathit{arg}}}{\pi}{f})}$ is a linearization of
                            $\Step{r}{\pi}{(\Invoke{\mathit{op}}{\mathit{arg}})}{c}$ as desired.
                      \item Case $\vec{\pi} \circ \pi'$: Then, there must exist some intermediate atomic configuration $(\sigma'', f'')$ where
                            $(\sigma, f, \vec{\pi}, \sigma'', f'') \in \delta^\ast_\omega$, $f''(\pi) = \Pending{\mathit{op}}{\mathit{arg}}$, $(\sigma'', \pi, \mathit{op'}, \mathit{arg'}, \sigma') \in \delta_\omega$, and $f' = \Sub{\Linearized{v}}{\pi}{f''}$. 
                            By induction, exists some linearization $r_\mathit{atomic}'$ of $\Step{r}{\pi}{l}{c}$ with final configuration $(\sigma'', f'')$. Clearly,
                            if $r_\mathit{atomic}'$ is a linearization of $\Step{r}{\pi}{l}{c}$, then so is $\Step{r_\mathit{atomic}'}{\pi'}{\Intermediate}{(\sigma, \Sub{\Linearized{v}}{\pi'}{f''})}$, completing the proof.
                            % We verify this by comparing the behaviors of both runs:
                            % \begin{align*}
                            %   &\behavior{\Step{r_\mathit{atomic}}{\pi}{(\Invoke{\mathit{op}}{\mathit{arg}})}{(\sigma, \Sub{\Pending{\mathit{op}}{\mathit{arg}}}{\pi}{f})}}\\
                            %   = &\behavior{r_\mathit{atomic}} ,, (\Invoke{\mathit{op}}{\mathit{arg}})
                            % \end{align*}
                    \end{itemize}
                \item Cases $\Intermediate$, $\Response{v}$: Omitted for brevity.
            \end{itemize}
    \end{itemize}
\end{proof}

We omit the $\Intermediate$ and $\Response{v}$ cases for brevity. They are also very similar to the invocation case,
and are included in the mechanization. Next, we state a lemma whose result implies completeness.

\begin{lemma}\label{lem:completeness}
  Consider any run $r$ and any linearization $r_\mathbf{atomic}$ with final configuration $(\sigma, f)$.
  Then, $(\sigma, f)$ is contained in the tracker of the final configuration of $r$.
\end{lemma}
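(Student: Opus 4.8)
The plan is to prove \Cref{lem:completeness} by structural induction on the (well-formed, augmented) run $r$, using the behavior of $r$ as the bridge to its linearization $r_\mathbf{atomic}$. The difficulty is that $r$ and $r_\mathbf{atomic}$ only agree on their behaviors and need not have the same length, so the first task is to understand how the shape of $r$'s last event constrains the shape of $r_\mathbf{atomic}$: in any atomic run, every event occurring after the final invocation-or-response event must be an $\Intermediate$ (linearization-point) step. I would therefore first prove a \emph{decomposition lemma}: if $\rho$ is a well-formed atomic run whose behavior is some $\beta$ followed by a single event $(\pi, e)$ with $e$ an invocation or response line, then there exist a well-formed atomic run $\rho'$ with $\behavior{\rho'} = \beta$ and final atomic configuration $(\sigma', f')$, a configuration $c_0$ with $\langle \sigma', f', \pi, e \rangle \leadsto c_0$, and a sequence $\vec{\pi}$ with $(c_0, \vec{\pi}, (\sigma, f)) \in \delta^\ast_\omega$, where $(\sigma, f)$ is the final configuration of $\rho$. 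This is proved by induction on $\rho$: peeling a trailing $\Intermediate$ step off $\rho$ leaves the behavior unchanged and, by the side conditions of the atomic \textsc{Step-Intermediate} rule of \Cref{fig:sem:atomic}, extends the $\delta^\ast_\omega$ derivation from the induction hypothesis by exactly one application of the \textsc{Step} rule of \Cref{fig:multistep}; peeling a trailing invocation or response step is the base case, with $\rho'$ being $\rho$ minus that step, $c_0$ its successor configuration, and $\vec{\pi}$ empty (\textsc{Refl}). Alongside this I would record two routine facts: every structural sub-run of a well-formed atomic run is well-formed (immediate by inversion on $\WF{\mathtt{atomic}}{-}$), and a well-formed atomic run with empty behavior has final configuration $(\sigma_0, f_0)$, where $\sigma_0$ is the initialization of $\omega$ and $f_0(\pi) = \Idle$ everywhere (a short induction, since an $\Intermediate$ step from the all-$\Idle$ configuration is not enabled).

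With these in hand the induction on $r$ is short. In the base case $r = \Initial{c}$, the behavior of $r$ is empty, hence so is that of $r_\mathbf{atomic}$, so its final configuration $(\sigma, f)$ equals $(\sigma_0, f_0)$ by the fact just stated; this is exactly the single atomic configuration with which the tracker of a well-formed augmented run is initialized, so $(\sigma, f) \in \mathcal{M}(r)$. In the inductive case $r = \Step{r'}{\pi}{l}{c}$, well-formedness of $r$ gives $\mathcal{M}(r) = \Evolve \; \pi \; l \; \mathcal{M}(r')$, and I case on $l$. If $l = \Intermediate$, then $\behavior{r} = \behavior{r'}$, so $r_\mathbf{atomic}$ is already a linearization of $r'$ with the same final configuration $(\sigma, f)$; the induction hypothesis gives $(\sigma, f) \in \mathcal{M}(r')$, and since $\Evolve \; \pi \; \Intermediate \; \mathcal{M}(r') = \LinearizePending(\mathcal{M}(r')) \supseteq \mathcal{M}(r')$ (take the empty linearization sequence via \textsc{Refl}), we are done. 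If $l = \Invoke{\mathit{op}}{\mathit{arg}}$, then $\behavior{r_\mathbf{atomic}}$ is $\behavior{r'}$ followed by the event $(\pi, \Invoke{\mathit{op}}{\mathit{arg}})$, so the decomposition lemma yields a well-formed $\rho'$ with $\behavior{\rho'} = \behavior{r'}$ and final configuration $(\sigma', f')$, an invocation step to some $c_0$, and a $\delta^\ast_\omega$ derivation from $c_0$ to $(\sigma, f)$. The induction hypothesis applied to $r'$ and $\rho'$ gives $(\sigma', f') \in \mathcal{M}(r')$; inversion on the atomic step relation (necessarily \textsc{Step-Invoke}) shows $f'(\pi) = \Idle$ and $c_0 = (\sigma', \Sub{\Pending{\mathit{op}}{\mathit{arg}}}{\pi}{f'})$, so $c_0 \in \EvolveInv(\mathcal{M}(r'), \mathit{op}, \mathit{arg})$; and then the $\delta^\ast_\omega$ derivation places $(\sigma, f)$ in $\LinearizePending(\EvolveInv(\mathcal{M}(r'), \mathit{op}, \mathit{arg})) = \mathcal{M}(r)$. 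The case $l = \Response{v}$ is identical in structure, using \textsc{Step-Response} (which forces $f'(\pi) = \Linearized{v}$, so that $c_0 = (\sigma', \Sub{\Idle}{\pi}{f'}) \in \EvolveRet(\mathcal{M}(r'), v)$) in place of \textsc{Step-Invoke}.

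I expect the decomposition lemma, together with the surrounding realignment bookkeeping, to be the main obstacle: $r$ and $r_\mathbf{atomic}$ genuinely differ in structure, so the whole argument turns on extracting, from the kind of $r$'s last event, exactly the right ``prefix, then one invocation/response step, then a block of linearization steps'' factorization of $r_\mathbf{atomic}$, and on the observation that a maximal block of atomic $\Intermediate$ steps is literally a $\delta^\ast_\omega$ derivation. Once that correspondence is in place, each case is a mechanical unfolding of the definitions of $\Evolve$, $\EvolveInv$, $\EvolveRet$, and $\LinearizePending$ against the atomic step semantics, dual to the soundness argument of \Cref{lem:soundness}.
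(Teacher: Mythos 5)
Your proof is correct. The paper omits its own proof of this lemma (deferring to the mechanization), so there is no written argument to compare against; but your route---structural induction on $r$, with a decomposition lemma that factors $r_\mathbf{atomic}$ around its last invocation/response event into a well-formed prefix, a single atomic step, and a trailing block of linearization points read off as a $\delta^\ast_\omega$ derivation---is precisely the dual of the paper's proof of \Cref{lem:soundness}, and it correctly identifies and discharges the one genuine difficulty, namely that $r$ and $r_\mathbf{atomic}$ agree only up to behavior and not in length.
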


For brevity, we omit an informal proof of completeness here. However, of course, this proof is included in the mechanization. Before proving the main result, we must deal with one hiccup. The soundness and completeness results refer to \textit{augmented} runs, whereas linearizability
refers to non-augmented runs (i.e. those without a tracker). However, note that augmented runs are isomorphic to implementation runs;
we can \textit{embed} an implementation run into an augmented run, and \textit{project} an augmented run to an implementation run. These operations simply add and remove 
the augmented (ghost state) from a run. Because the evolution of the tracker at every step is deterministic, these operations are mutually inverse.
They also preserve well-formedness in the sense that $\WF{\mathtt{aug}}{\mathtt{embed} \; r}$ is true if $\WF{\mathtt{impl}}{r}$, and dually for projection.

\adequacy*

\begin{proof}
  We prove both directions in turn:
  \begin{itemize}
    \item $(\implies)$ Suppose an implementation is linearizable and consider any augmented run $r$. By assumption, there must exist some linearization $r_\mathbf{atomic}$ of the projection of $r$.
    Note that a linearization of the projection of $r$ is also a linearization of $r$, as their behaviors are the same.
    Then, by \Cref{lem:completeness}, the tracker of $r$ must contain the final configuration of $r_\mathbf{atomic}$, and is thus inhabited.
    \item $(\impliedby)$ Suppose the tracker of every augmented run of an implementation is inhabited. To prove that an implementation is linearizable, it suffices to show that there exists a some linearization of
    every run $r$. Because $\tracker{\final{(\mathtt{embed} \; r)}}$ is inhabited, there must exist some $(\sigma, f) \in \tracker{\final{(\mathtt{embed} \; r)}}$. By \Cref{lem:soundness}, $(\sigma, f)$ is the final configuration of some linearization of $\mathtt{embed} \; r$. 
    Any linearization of an embedding of $r$ is also an linearization of $r$, and so there in turn exists some linearization of $r$, completing the proof.
  \end{itemize}
\end{proof}

% Previously, configurations only described the local state for every process, alongside a state assignment for the shared global state.
% Now however, we must also maintain the state of the tracker in the configuration. Additionally, depending on the specific
% algorithm being verified, it may also be useful to include other auxiliary ghost state beyond that provided by the tracker.
% To be fully general, we allow the client to include in the configuration other arbitrary auxiliary state of type \Aux. We assume
% we are given a function $\Sigma : \Aux \to \Type$ mapping every auxiliary object to its set of states. We also assume that
% we are given an initial auxiliary state assignment $\sigma_0$. Now, we can encode an augmented configuration as follows:

% \begin{minted}[escapeinside=~~]{coq}
%   Record configuration ~$\Pi$~ ~$\overline{\Omega}$~ {~$\Omega$~} `{Countable ~$\Pi$~, Object ~$\Pi$~ ~$\overline{\Omega}$~, Object ~$\Pi$~ ~$\Omega$~} (~$\omega$~ : ~$\Omega$~) := {
%     base_configuration : Implementation.configuration ~$\Pi$~ ~$\Omega$~ ~$\omega$~;
%     auxiliary_state : ~$\forall \omega_\mathsf{aux} : \Aux$~, ~$\Sigma \; \omega_\mathsf{aux}$~;
%     tracker : meta_configuration ~$\Pi$~ ~$\omega$~;
%   }.
% \end{minted}

\section{Case study}
\label{sec:case-study}
We now apply the tracking technique, and the proof of its metatheory, to develop an end-to-end verified
proof of linearizability for a simple concurrent register. Specifically, we develop and verify
a register supporting \textit{read} and \textit{write} operations,
which is implemented in terms of a base register that only supports read and compare-and-swap (CAS).
% Specifically, we develop and verify
% a register supporting \textit{read} and \textit{write} operations from a base register supporting
% only read and compare-and-swap (CAS).
One might imagine that this case study is not of practical interest, as all modern processors support word-length read, write, and CAS operations.
However, most architectures support \textit{double-width} compare-and-swap (DWCAS), but \textit{not} double-width writes. This algorithm
allows one to simulate double-width write on modern architectures.

Formally, we take $\Omega = \{ \RWCell \}$, a singleton containing exactly the concurrent register to be implemented. We can then define the \textit{type} of $\RWCell$ as follows:

\begin{itemize}
  \item We define its set of states, $\States{\RWCell}$ to be $\Val$, the set of all possible values. This means that a value of arbitrary type can be stored in the register.
  \item Define $\OP{\RWCell}$ to be $\{\Read, \Write\}$, indicating that the cell should support read and write operations.
  \item The transition relation $\delta_\RWCell$ is described by its action on both operations. $\Read$ simply returns the value currently stored in the cell without mutating the state, and
  $\Write$ updates the state to the value passed as an argument.
\end{itemize}

We define $\overline{\Omega}$, the set of base objects used to implement $\Omega$, to be $\{ \RCASCell \}$.
It is a singleton containing exactly one cell that supports read and CAS operations. Again, we can now
define its type:

\begin{itemize}
  \item We define its set of states, $\Sigma_\RCASCell$ to be $\Val$, the set of all possible values.
  \item Define $\OP{\RWCell}$ to be $\{\Read, \CAS\}$.
  \item We encode the transition relation $\delta_\RCASCell$ as follows. The semantics for $\Read$ remains unchanged. When provided a pair $(\mathit{cur}, \mathit{new})$, $\CAS$ will compare \textit{cur} to the value currently stored in the cell. If they are equal, then \textit{new} is written to the cell. Otherwise, the operation fails, and the state is unchanged.
\end{itemize}

We can implement $\Omega$ with $\overline{\Omega}$ as described by pseudocode in \Cref{fig:rwcas}. A read operation
performed on the $\RWCell$ simply performs a read on the $\RCASCell$ and returns the result. A $\Write(y)$ operation first reads the value currently stored in the cell, stores it in $x$, and performs $\CAS(x, y)$.

\begin{figure}[t]

  \caption{Implementation of a read-write register using read and CAS operations}
  
  \begin{minipage}[t]{0.48\textwidth}
  \begin{algorithm}[H]
  \DontPrintSemicolon
  \SetKwProg{Fn}{Procedure}{}{}
  \Fn{\textbf{Read()}}{

    $x \leftarrow \Read(\RCASCell)$\;
    \Return $x$\;
  }
  \end{algorithm}
  \end{minipage}\hfill
  \begin{minipage}[t]{0.48\textwidth}
  \begin{algorithm}[H]
  \DontPrintSemicolon
  \SetKwProg{Fn}{Procedure}{}{}
  \Fn{\textbf{Write($y$)}}{
    $x \leftarrow \Read(\RCASCell)$\;
    $\CAS$($\RCASCell$, $x$, $y$)$\;$\;
    \Return $\top$\;
  }
  \end{algorithm}
  \end{minipage} 

\label{fig:rwcas}

\end{figure}

It is not immediately clear why this implementation is linearizable. If another process $\pi'$ writes a different value to the cell between lines 1 and 2 of a $\Write$ by $\pi$, then the $\CAS$ may fail, and the write will never \textit{physically} take effect. Intuitively, however, this means that the write by $\pi$
could have linearized immediately before the write of $\pi'$. We formalize this intuition in our mechanization,
providing a verified end-to-end proof of linearizability for this implementation.

\section{Mechanization}
\label{sec:mechanization}
We have mechanized all of the results described in the Rocq proof assistants,
encoding all data structures as formulated in the above snippets. The
mechanization of the informal adequacy proofs follows the structure
outlined in \Cref{sec:tracker}. The mechanized proof of
linearizability of the concurrent register is quite complex, likely
due to the lack of semantics resulting from an untyped language.
In the mechanization, we axiomatize functional extensionality so that
extenstionally equal functions representing maps are indeed equal.
We also depend on the \texttt{stdpp} package, from which we employ
its finite maps. The mechanization amounts to about 3250 lines of
Rocq total.

% \section{Related Work}
% \label{sec:related}
% \input{sections/related.tex}

\section{Conclusion}
\label{sec:conclusion}
We have presented a mechanized metatheory of the \textit{metaconfigurations tracking technique},
verifying its soundness and completeness for proving linearizability. To this end, we formalized
a semantics for an asychronous concurrent system, linearizability itself, and the adequacy of the
technique. This allows for the construction of the first \textit{end-to-end} linearizability proofs
using this tracking technique. We demonstrate this through a mechanization of an
end-to-end proof of a concurrent register whose implementation is already used in real-world systems.

%%
%% Bibliography
%%

%% Please use bibtex, 

\bibliography{main}

\appendix

\end{document}